\newtheorem{theorem}{Theorem}
\newtheorem{lemma}{Lemma}    
\newtheorem{remark}{Remark}
\newtheorem{proposition}{Proposition}    
\newtheorem{definition}{Definition}
\newcommand{\R}{\mathbb{R}}
\newcommand{\bx}{\mathbf{x}}
\DeclareMathOperator*{\minimize}{minimize \quad}
\def\bx{\underline{\mathbf{x}}}
\def\bx{\mathbf{x}}
\def\R{\mathbb{R}}
\def\Sym{\mathbb{S}}
\def\proj{\operatorname{Proj}_{\mathcal{U}_t}}
\def\nnc{\pi}
\def\Reach{\mathcal{R}}
\def\ReachBar{\bar{\mathcal{R}}}
\def\init{\mathcal{X}_0}
\def\goal{\mathcal{G}}
\def\avoid{\mathcal{A}}
\def\Ein{E_{\rm{in}}}
\def\Min{M_{\rm{in}}}
\def\cMin{\mathcal{M}_{\rm{in}}}
\def\Emid{E_{\rm{mid}}}
\def\Mmid{M_{\rm{mid}}}
\def\cMmid{\mathcal{M}_{\rm{mid}}}
\def\Eout{E_{\rm{out}}}
\def\Mout{M_{\rm{out}}}
\def\mpc{\text{MPC}}
\title{\LARGE \bf Reach-SDP: Reachability Analysis of Closed-Loop Systems with Neural Network Controllers via Semidefinite Programming}
\author{Haimin Hu, Mahyar Fazlyab, Manfred Morari, and George J. Pappas
\thanks{Work supported by the NSF under grants {DARPA Assured Autonomy and NSF CPS 1837210}. The authors are with the Department of Electrical and Systems Engineering, University of Pennsylvania. Email: \{haiminhu,mahyarfa,morari,pappasg\}@seas.upenn.edu.}}
\begin{document}
\pagestyle{plain}
\date{}
\maketitle

\begin{abstract}
	There has been an increasing interest in using neural networks in closed-loop control systems to improve performance and reduce computational costs for on-line implementation.
	However, providing safety and stability guarantees for these systems is challenging due to the nonlinear and compositional structure of neural networks.
	In this paper, we propose a novel forward reachability analysis method for the safety verification of linear time-varying systems with neural networks in feedback interconnection.
	Our technical approach relies on abstracting the nonlinear activation functions by quadratic constraints, which leads to an outer-approximation of forward reachable sets of the closed-loop system.
	We show that we can compute these approximate reachable sets using semidefinite programming.
	We illustrate our method in a quadrotor example, in which we first approximate a nonlinear model predictive controller via a deep neural network and then apply our analysis tool to certify finite-time reachability and constraint satisfaction of the closed-loop system.
\end{abstract}

\section{Introduction}
Deep neural networks (DNN) have seen renewed interest in recent years due to the proliferation of data and access to more computational power. In autonomous systems, DNNs are either used as feedback controllers \cite{zhang2016learning,hertneck2018learning}, motion planners \cite{qureshi2019motion}, perception modules, or  end-to-end controllers \cite{pan2017agile,bojarski2016end}. Despite their high performance, DNN-driven autonomous systems lack formal safety and stability guarantees. Indeed, recent studies show that DNNs can be vulnerable to small perturbations or adversarial attacks \cite{papernot2016limitations,kurakin2016adversarial}. This issue is more pronounced in closed-loop systems, as a small perturbation in the loop can dramatically change the behavior of the closed-loop system over time. Therefore, it is of utmost importance to develop tools for verification of DNN-driven control systems. The goal of this paper is to develop a methodology, based on semidefinite programming, for safety verification and reachability analysis of linear dynamical systems in feedback interconnection with DNNs.

Safety verification or reachability analysis aims to show that starting from a set of initial conditions, a dynamical system cannot evolve to an unsafe region in the state space. Methods for reachability analysis can be categorized into exact (complete) or approximate (incomplete), which compute the reachable sets exactly and approximately, respectively. Verification of dynamical systems has been extensively studied in the past \cite{kurshan2014computer,bemporad1999verification,bemporad2000optimization,prajna2004safety,tomlin2003computational}. More recently, the problem of output range analysis of neural networks has been addressed in \cite{huang2017safety,lomuscio2017approach,ehlers2017formal,katz2017reluplex,raghunathan2018semidefinite,fazlyab2019safety,fazlyab2019efficient,kolter2017provable,dutta2017output}, mainly motivated by robustness analysis of DNNs against adversarial attacks \cite{kurakin2016adversarial}. Compared to these bodies of work, verification of closed-loop systems with neural network controllers has been less explored \cite{huang2019reachnn,ivanov2019verisig,dutta2019reachability}. In \cite{ivanov2019verisig}, a method for verification of sigmoid-based neural networks in feedback with a hybrid system is proposed, in which the neural network is transformed into a hybrid system and then a standard verification tool for hybrid systems is invoked. In \cite{huang2019reachnn}, a new reachability analysis approach based on Bernstein polynomials is proposed that can verify DNN-controlled systems with Lipschitz continuous activation functions. Dutta et al. \cite{dutta2019reachability} use a flow pipe construction scheme to over approximate the reachable sets. A piecewise polynomial model is used to provide an approximation of the input-output mapping of the controller and an error bound on the approximation. This approach, however, is only applicable to Rectified Linear Unit (ReLU) activation functions.

\textit{Contributions.} In this paper, we propose a semidefinite program (SDP) for reachability analysis of linear time-varying dynamical systems in feedback interconnection with neural network controllers equipped with a projection operator, which projects the output of the neural network (the control action) onto a specified set of control inputs.
Our technical approach relies on abstracting the nonlinear activation functions as well as the projection operator by quadratic constraints \cite{fazlyab2019safety}, which leads to an outer-approximation of forward reachable sets of the closed-loop system. We show that we can compute these approximate reachable sets using semidefinite programming. 
Our approach can be used to analyze control policies learned by neural networks in, for example, model predictive control (MPC)~\cite{chen2018approximating} and constrained reinforcement learning~\cite{wen2018constrained}.
%
%
%
To the best of our knowledge, our result is the first convex-optimization-based method for reachability analysis of closed-loop systems with neural networks in the loop. We illustrate the utility of our approach in two numerical examples, in which we certify finite-time reachability and constraint satisfaction of a double integrator and a quadrotor.

\subsection{Notation and Preliminaries}
We denote the set of real numbers by $\R$, the set of real $n$-dimensional vectors by $\R^n$, the set of $m \times n$-dimensional
matrices by $\R^{m \times n}$, and the $n$-dimensional identity matrix by $I_n$.
We denote by $\Sym^n$, $\Sym^n_+$, and $\Sym^n_{++}$ the sets of $n$-by-$n$ symmetric, positive semidefinite, and positive definite matrices, respectively.
For $A \in \R^{m \times n}$, the inequality $A \geq 0$ means all entries of $A$ are non-negative.
For $A \in \Sym^n$, the inequality $A \succeq 0$ means $A$ is positive semidefinite.
\begin{definition}[Sector-bounded nonlinearity]
	A nonlinear function $\varphi: \R \rightarrow \R$ is sector-bounded on $[\alpha, \beta]$, where $0 \leq \alpha \leq \beta$, if the following inequality holds for all $x \in \R$,
	\begin{align}
	\alpha \leq \frac{ \varphi(x)}{x} \leq \beta,
	\end{align}
	which can be equivalently expressed as the quadratic inequality
	\begin{equation}
	\label{eq::QC::sector_bounded}
	\begin{bmatrix}
	x \\ \varphi(x)
	\end{bmatrix}^\top \begin{bmatrix}
	-2\alpha \beta & \alpha+\beta \\ \alpha+\beta & -2
	\end{bmatrix}\begin{bmatrix}
	x \\ \varphi(x)
	\end{bmatrix} \geq 0.
	\end{equation}
\end{definition}

\begin{definition}[Slope-restricted nonlinearity]
	A nonlinear function $\varphi: \R \rightarrow \R$ is slope-restricted on $[\alpha, \beta]$, where $0 \leq \alpha \leq \beta$, if for any pairs of $(x, \varphi(x))$ and $\left(x^{\star}, \varphi\left(x^{\star}\right)\right)$,
	\begin{align}
	\alpha \leq \frac{\varphi(x)-\varphi(x^\star)}{x-x^\star} \leq \beta,
	\end{align}
	which can be equivalently expressed as
	\begin{equation}
	\label{eq::QC::slope_res}
	\begin{bmatrix}
	x\!-\!x^\star \\ \varphi(x)\!-\!\varphi(x^\star)
	\end{bmatrix}^\top \begin{bmatrix}
	-2\alpha \beta & \alpha+\beta \\ \alpha+\beta & -2
	\end{bmatrix}\begin{bmatrix}
	x\!-\!x^\star \\ \varphi(x)\!-\!\varphi(x^\star)
	\end{bmatrix} \geq 0.
	\end{equation}
\end{definition}
\section{Problem Formulation}

\subsection{Neural Network Control System}
We consider a discrete-time linear time-varying system
\begin{equation}
\label{eq::LTV}
P: \ x_{t+1} = A_t x_t + B_t u_t + c_t,
\end{equation}
where $x_t \in \R^{n_x}$, $u_t \in \R^{n_u}$ are the state and control vectors, and $c_t \in \R^{n_x}$ is an exogenous input.
We assume that the system \eqref{eq::LTV} is subject to input constraints,
\begin{equation}
\label{eq::constr}
u_t \in \mathcal{U}_t, \ t=0,1,\cdots.
\end{equation}
which represent, for example, actuator limits that are naturally satisfied or hard constraints that must be satisfied by a control design specification. A specialization that we consider in this paper is input box constraint $\mathcal{U}_t = \{u_t \mid \underline{u}_t \leq u_t \leq \bar{u}_t\}$.
Furthermore, we assume a state-feedback controller $\nnc\left(x_t\right): \R^{n_x} \rightarrow \R^{n_u} $ parameterized by a multi-layer feed-forward fully-connected neural network. The map $x \mapsto \nnc(x)$ is described by the following equations,
\begin{equation}
\label{eq::NN}
\begin{aligned}
    x^0 &= x \\
    x^{k+1}&=\phi(W^{k} x^{k}+b^{k}) \quad k=0, \cdots, \ell-1 \\
    \nnc(x) &= W^{\ell} x^{\ell}+b^{\ell},
\end{aligned}
\end{equation}
where $W^k \in \R^{n_{k+1} \times n_{k}}$, $b^{k} \in \R^{n_{k+1}}$ are the weight matrix and bias vector of the $(k+1)$-th layer.
The nonlinear activation function $\phi(\cdot)$ is applied component-wise to the pre-activation vectors, i.e.,
\begin{equation}
    \phi(x) \coloneqq [ \varphi(x_{1}) \cdots \varphi(x_{d})]^\top, \ x\in \R^d,
\end{equation}
where $\varphi: \R \rightarrow \R$ is the activation function of each individual neuron.
Common choices include ReLU, sigmoid, tanh,
leaky ReLU, etc. In this paper, we consider ReLU activation functions in our technical derivations but we can address other activation functions following the framework of \cite{fazlyab2019safety}.
To ensure that output of neural network respects the input constraint, we consider a projection operator in the loop and define the control input as
\begin{equation}
\label{eq::sat_control}
u_t = \proj \left( \nnc(x_t) \right),
\end{equation}
%
We denote the closed-loop system with dynamics \eqref{eq::LTV} and the projected neural network control policy \eqref{eq::sat_control} by
\begin{equation}
\label{eq::closed_loop_sys}
    x_{t+1} = f_\pi \left( x_t \right),
\end{equation}
which is a non-smooth nonlinear system because of the presence of the nonlinear activation functions in the neural network and the projection operator.
%
\begin{remark}
	In this paper, we only consider box  constraints for the input 
	and leave more sophisticated constraints such as polytopes to future work. These constraints are useful in, for example, MPC design problems. In~\cite{hertneck2018learning} a robust MPC controller is approximated by a neural network equipped with a projection operator to ensure satisfaction of polytopic input constraints.
\end{remark}
\noindent For the closed-loop system \eqref{eq::closed_loop_sys} subject to the input constraint in \eqref{eq::constr}, we denote by $\Reach_t(\init)$ the forward reachable set at time $t$ from a given set of initial conditions $\init \subseteq \R^{n_x}$, which is defined by the recursion
\begin{equation}
\label{eq::one_step_reach}
\begin{aligned}
\Reach_{t+1}&(\init) \coloneqq f_{\pi}(\Reach_{t}(\init)), \ \Reach_0(\init) = \init. 
\end{aligned}
\end{equation}
as illustrated in Figure \ref{fig::closed_loop_reach}.
%
\begin{figure}[!hbtp]
    \centering
    \includegraphics[width=0.6\columnwidth]{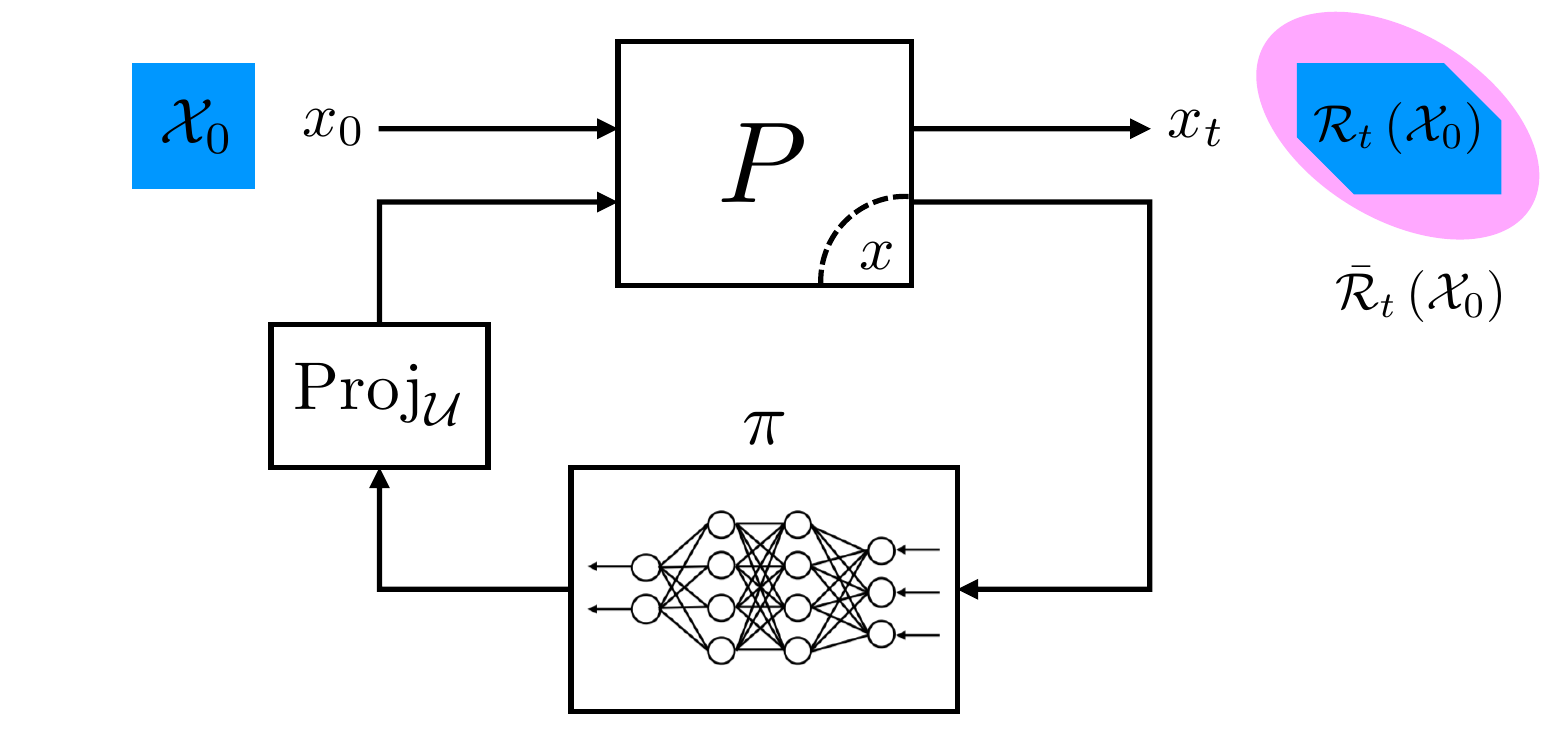}
    \caption{\label{fig::closed_loop_reach} An illustration of closed-loop reachability with the initial set $\init$, the $t$-step forward reachable set $\Reach_t(\init)$, and its over-approximation $\ReachBar_t(\init)$ shown in magenta. } 
\end{figure}

\subsection{Finite-Time Reach-Avoid Verification Problem}
In this paper, we are interested in verifying the finite-time reach-avoid properties of the closed-loop system \eqref{eq::closed_loop_sys}.
More specifically, given a goal set $\goal \subseteq \R^{n_x}$ and a sequence of avoid sets $\avoid_t \subseteq \R^{n_x}$, we would like to test if all initial states $x_0 \in \init$ of the closed-loop system \eqref{eq::closed_loop_sys} can reach $\goal$ in a finite time horizon $N \geq 0$, while avoiding $\avoid_t$ for all $t = 0,\cdots,N$.
This is equivalent to test if,
\begin{subequations}
\label{eq::reach_avoid}
\begin{align}
    \label{eq::reach_avoid_1} 
    & \Reach_N(\init) \subseteq \goal \\
    \label{eq::reach_avoid_2}  
    & \Reach_t(\init) \cap \avoid_t = \emptyset, \ \forall t = 0,\cdots,N
    \end{align}
\end{subequations}
holds true for \eqref{eq::closed_loop_sys}.
%
There exist efficient methods~\cite{blanchini2008set} and software implementations~\cite{MPT3} for testing set inclusion \eqref{eq::reach_avoid_1} and set intersection \eqref{eq::reach_avoid_2}.
However, computing exact reachable sets for the nonlinear closed-loop system \eqref{eq::closed_loop_sys} is, in general, computationally intractable.
Therefore, we resort to finding outer-approximations of the closed-loop reachable sets, $\ReachBar_t(\init) \supseteq \Reach_{t}(\init)$, and use them to test if,
\begin{subequations}
\label{eq::reach_avoid_approx}
\begin{align}
    \label{eq::reach_avoid_approx_1} 
    & \ReachBar_N(\init) \subseteq \goal \\
    \label{eq::reach_avoid_approx_2}  
    & \ReachBar_t(\init) \cap \avoid_t = \emptyset, \ \forall t = 0,\cdots,N
    \end{align}
\end{subequations}
is true. Note that \eqref{eq::reach_avoid_approx} are sufficient conditions for \eqref{eq::reach_avoid}. To obtain meaningful certificates, we  want the approximations $\ReachBar_t(\init)$ to be as tight as possible.
Thus our goal is to compute the tightest outer-approximations of the $t$-step reachable sets.
This can be addressed, for example, by solving the following optimization problem,
\begin{equation}
\begin{aligned}
\label{eq::min_vol}
\operatorname{minimize} \ &\operatorname{Volume}\left(\ReachBar_t(\init)\right) \\
\text{subject to} \ &\Reach_{t}(\init) \subseteq \ReachBar_t(\init).
\end{aligned}
\end{equation}
The solution to the above problem is the minimum-volume outer-approximation of the $t$-step reachable set of the closed-loop system \eqref{eq::closed_loop_sys}.
In the following sections, we will derive a convex relaxation to the optimization problem \eqref{eq::min_vol}.
\section{Problem Abstraction via Quadratic Constraints}
\label{sec::QC}
This section focuses on estimating the one-step forward reachable set $\ReachBar(\init)$.
The main idea is to replace the original closed-loop system $f_\pi$ with an abstracted system $\tilde{f}_\pi$ in the sense that $\tilde{f}_\pi$ over-approximates the output of the original system for any given initial set $\init$, i.e. $\Reach(\init) = f_\pi(\init) \subseteq \tilde{f}_\pi(\init) = \ReachBar(\init)$.
Based on this abstracted system, we can then compute $\ReachBar(\init)$ via semidefinite programming (SDP).
In the following, we will develop such an abstraction using the framework of Quadratic Constraints (QCs).

\subsection{Initial Set}
We begin with a formal definition of QCs for sets \cite{fazlyab2019safety}.
\begin{definition}[Quadratic Constraints]
	Let $\mathcal{X} \subset \R^{d}$ be a nonempty set and $\mathcal{Q} \subset \Sym^{d+1}$ be the set of all symmetric, but possibly indefinite matrices $Q$ such that the inequality
	\begin{equation}
	\label{eq::QC::def}
	\begin{bmatrix}
	x \\
	1
	\end{bmatrix}^{\top} Q
	\begin{bmatrix}
	x \\
	1
	\end{bmatrix} \geq 0,
	\end{equation}
	holds for all $x \in \mathcal{X}$.
	Then we say $\mathcal{X}$ satisfies the QC defined by $\mathcal{Q}$. The vector $\begin{bmatrix}
	x^\top \ 1 \end{bmatrix}^\top$ is called the basis of this QC.
\end{definition}
\noindent Here, for each fixed $Q$, the set of $x$'s satisfying \eqref{eq::QC::def} is a superset of $\mathcal{X}$. Indeed, we have that
\begin{equation}
\mathcal{X} \subseteq \bigcap_{Q \in \mathcal{Q}}\left\{x \in \mathbb{R}^{d} \,\middle\vert\, \begin{bmatrix}
x \\
1
\end{bmatrix}^{\top} Q 
\begin{bmatrix}
x \\
1
\end{bmatrix} \geq 0\right\}.
\end{equation}
%
In this paper, we mainly use polytopes and ellipsoids as the initial set $\init$.
Nonetheless, as addressed by \cite{fazlyab2019safety}, other types of sets such as hyper-rectangles and zonotopes are also applicable in this setting.

\begin{proposition}[QC for polytope]
Suppose the initial set $\init$ is a polytope defined by $\init = \left\{x \in \mathbb{R}^{n_{x}} \mid A x \leq b\right\}$.
Then $\init$ satisfies the QC defined by
\begin{equation}
\label{eq::QC::poly}
\mathcal{P}=\left\{P \,\middle\vert\, P=\begin{bmatrix}
A^{\top} \Gamma A &-A^{\top} \Gamma b \\
-b^{\top} \Gamma A &b^{\top} \Gamma b
\end{bmatrix}, \Gamma \in \mathbb{S}^{m}, \Gamma \geq 0\right\},
\end{equation}
where $m$ is the dimension of $b$.
The basis is $\begin{bmatrix}
x^\top \ 1 \end{bmatrix}^\top$.
\end{proposition}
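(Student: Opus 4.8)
The plan is to verify the defining inequality \eqref{eq::QC::def} directly: I will show that for every $\Gamma \in \Sym^m$ with $\Gamma \geq 0$ and every $x \in \init$, the quadratic form $\begin{bmatrix} x \\ 1 \end{bmatrix}^\top P \begin{bmatrix} x \\ 1 \end{bmatrix}$ is non-negative, where $P$ has the block form in \eqref{eq::QC::poly}. The whole argument hinges on recognizing that this quadratic form collapses to a clean expression in the slack vector $Ax - b$.

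First I would expand the block product. Multiplying out and using that $\Gamma$ is symmetric so the two cross terms coincide,
\begin{align}
\begin{bmatrix} x \\ 1 \end{bmatrix}^\top P \begin{bmatrix} x \\ 1 \end{bmatrix} &= x^\top A^\top \Gamma A x - x^\top A^\top \Gamma b - b^\top \Gamma A x + b^\top \Gamma b \notag \\
&= (Ax - b)^\top \Gamma (Ax - b). \notag
\end{align}
This identity reduces the claim to showing $(Ax-b)^\top \Gamma (Ax-b) \geq 0$ for every admissible $x$ and $\Gamma$.

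Next I would exploit membership in the polytope. For $x \in \init$ we have $Ax \leq b$, so the slack vector $v \coloneqq Ax - b \in \R^m$ is entrywise non-positive, $v \leq 0$. Writing the quadratic form as a double sum, $v^\top \Gamma v = \sum_{i,j} \Gamma_{ij} v_i v_j$, each summand is the product of a non-negative entry $\Gamma_{ij} \geq 0$ with $v_i v_j$, and $v_i v_j \geq 0$ because it is a product of two non-positive reals. Hence every term is non-negative, the sum is non-negative, and the proposition follows.

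The step I expect to require the most care is conceptual rather than computational: it is essential that the hypothesis is $\Gamma \geq 0$ (entrywise non-negativity) and \emph{not} the stronger $\Gamma \succeq 0$ (positive semidefiniteness). Definiteness would force $v^\top \Gamma v \geq 0$ for arbitrary $v$, but here we deliberately invoke only entrywise non-negativity, which suffices precisely because the slack vector $v = Ax - b$ carries a known sign pattern. Keeping this distinction straight is the crux of the argument: it is the interplay between the sign structure of $v$ and the entrywise condition $\Gamma \geq 0$, rather than any definiteness of $\Gamma$, that drives the proof.
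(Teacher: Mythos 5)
Your proof is correct, and it is the standard argument for this fact (the paper itself states the proposition without proof, deferring to the cited reference on quadratic constraints for neural networks, where exactly this computation appears): the block expansion to $(Ax-b)^\top\Gamma(Ax-b)$ is right, and you correctly use the paper's convention that $\Gamma\geq 0$ means entrywise non-negativity, which combined with the entrywise non-positivity of the slack $Ax-b$ gives the claim. Nothing to add.
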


\begin{proposition}[QC for ellipsoid]
Suppose the initial set $\init$ is an ellipsoid defined by $\init = \left\{ x \in \R^{n_x} \mid \| Ax + b \|_2 \leq 1 \right\}$, where $A \in \Sym^{n_x}$ and $b \in \R^{n_x}$.
Then $\init$ satisfies the QC defined by
\begin{equation}
\label{eq::QC::ellips}
\mathcal{P}=\left\{P \,\middle\vert\, P=\mu\begin{bmatrix}
-A^{\top} A & -A^{\top} b \\
-b^{\top} A & 1-b^{\top} b
\end{bmatrix}, \ \mu \geq 0\right\},
\end{equation}
with basis $\begin{bmatrix} x^\top \ 1 \end{bmatrix}^\top$.
\end{proposition}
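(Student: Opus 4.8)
The plan is to verify directly that every matrix in $\mathcal{P}$ satisfies the defining inequality \eqref{eq::QC::def} of a QC for all $x \in \init$. Fix an arbitrary $P \in \mathcal{P}$, so that $P = \mu M$ with $\mu \geq 0$ and $M = \begin{bmatrix} -A^\top A & -A^\top b \\ -b^\top A & 1 - b^\top b \end{bmatrix}$. Since the nonnegative scalar $\mu$ factors out of the quadratic form, it suffices to determine the sign of $[x^\top \ 1]\, M\, [x^\top \ 1]^\top$.

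First I would carry out the block multiplication. Computing $M\,[x^\top \ 1]^\top$ gives the vector $[\,-A^\top A x - A^\top b,\ -b^\top A x + 1 - b^\top b\,]^\top$, and taking the inner product with $[x^\top \ 1]$ yields $-x^\top A^\top A x - x^\top A^\top b - b^\top A x + 1 - b^\top b$. The key observation is that, because $x^\top A^\top b$ and $b^\top A x$ are scalars that are transposes of one another (hence equal), these four terms reassemble into a perfect square: the expression equals $1 - (Ax+b)^\top(Ax+b) = 1 - \norm{Ax+b}^2$.

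With this identity in hand, the conclusion is immediate. For any $x \in \init$ we have $\norm{Ax+b}_2 \leq 1$, hence $\norm{Ax+b}^2 \leq 1$ and $1 - \norm{Ax+b}^2 \geq 0$; multiplying by $\mu \geq 0$ preserves the sign, so $[x^\top \ 1]\, P\, [x^\top \ 1]^\top \geq 0$. Since $P \in \mathcal{P}$ was arbitrary, $\init$ satisfies the QC defined by $\mathcal{P}$ with basis $[x^\top \ 1]^\top$, which is exactly the claim.

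The main (and essentially only) obstacle is the bookkeeping in the block expansion, specifically confirming that the cross terms combine to reconstruct $\norm{Ax+b}^2$ exactly. It is worth noting that the symmetry of $A$ is not actually needed for this step; the cancellation relies only on the scalar identity $x^\top A^\top b = b^\top A x$, so the result would hold for a general $A \in \R^{n_x \times n_x}$ as well.
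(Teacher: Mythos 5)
Your proof is correct: the block expansion does reassemble into $\mu\bigl(1 - \|Ax+b\|_2^2\bigr)$, which is nonnegative on $\init$, and your side remark that the symmetry of $A$ is not needed is also accurate. The paper states this proposition without proof (deferring to the cited QC framework), and your direct verification is exactly the intended argument, so there is nothing to reconcile.
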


\noindent To this end, we have over-approximated the initial set $\init$ and represented it as QCs.
We will see in Section \ref{sec::SDP} that the matrix $P \in \mathcal{P}$ appears as a decision variable in the SDP.
It provides an extra degree of freedom towards a less conservative estimation of the reachable set $\ReachBar(\init)$.

\subsection{Reachable Set}
\label{sec::eq::QC::reach_set}
In order to facilitate the relaxation of the volume-minimization problem in \eqref{eq::min_vol}, we assume the candidate set $\ReachBar(\init)$ that over-approximates $\Reach(\init)$ is represented by the intersection of finitely many \textit{quadratic inequalities}:
\begin{equation}
\begin{aligned}
\label{QI::reach_set}
\ReachBar(\init) &=
\! \bigcap_{i=1}^{m} \! &\left\{ x \in \R^{n_x} \,\middle\vert\,  \begin{bmatrix}
x \\
1
\end{bmatrix}^{\top} \! S_{i} \! \begin{bmatrix}
x \\
1
\end{bmatrix} \leq 0 
\right\},
\end{aligned}
\end{equation}
where matrices $S_i \in \Sym^{2n_x+1}$ capture the shape and volume of the reachable set and will appear as decision variables in the SDP problem.
Typically, \eqref{QI::reach_set} is able to describe polytopic and ellipsoidal sets, which we discuss in detail now.

\subsubsection{Polytopic reachable set}
If the reachable set is to be over-approximated by a polytope, i.e. $\ReachBar(\init) = \left\{ x \in \R^{n_x} \mid A f_\pi(x) \leq b \right\}$, then,
\begin{equation}
\label{QI::reach_set_poly}
S_i =
\left[\begin{array}{ccc}
0 & a_i^\top \\
a_i & -2b_i
\end{array}\right],
\end{equation}
where $a_i \in \R^{1 \times n_x}$ is the $i$-th row of $A \in \R^{m \times n_x}$ and $b_i \in \R$ is the $i$-th entry of $b \in \R^m$.
Here, we require the $A$ matrix, which determines the orientation of each facet of the polytope, to be given while we leave $b_i$'s as decision variables in the SDP problem.

\subsubsection{Ellipsoidal reachable set}
If the reachable set is to be described by an ellipsoid, i.e. $\ReachBar(\init) = \left\{ x \in \R^{n_x} \mid \| A f_\pi(x) + b \|_2 \leq 1 \right\}$, then,
\begin{equation}
\label{QI::reach_set_ellips}
S_i = S =
\left[\begin{array}{cc}
A^\top A & A^\top b \\
b^{\top} A & b^{\top} b-1
\end{array}\right],
\end{equation}
where $A \in \Sym^{n_x}$ and $b \in \R^{n_x}$ are decision variables describing the center, orientation, and volume of the ellipsoid.

\begin{remark}
For polytopic reachable sets, if the facets $a_i$ are properly chosen, the resulting outer-approximations can be very tight, as we will see in Section \ref{sec::exp::DI}.
However, finding facets of a higher dimensional polytope can be prohibitively challenging.
Ellipsoidal reachable sets scale better and therefore are more suitable in this case.
\end{remark}

\subsection{ReLU Neural Networks with Projection}
In this subsection, we show how to abstract the nonlinear activation functions of the neural network by QCs. We will focus on the ReLU function throughout the paper.
Other types of activation functions such as sigmoid and tanh can also be represented by QCs. See \cite{fazlyab2019safety} for details. Recall that the input constraint sets are $\mathcal{U}_t = \{u_t \mid \underline{u}_t \leq u_t \leq \bar{u}_t\}$. Consider the following recursion,
\begin{equation}
\label{eq::proj_NN}
\begin{aligned}
x_t^0 &= x_t \\
x_t^{k+1}&=\max(W^{k} x_t^{k}+b^{k},0) \quad k=0, \cdots, \ell-1 \\
x_t^{\ell+1}& = \max(W^\ell x_t^{\ell}+b^\ell-\underline{u}_t,0)+\underline{u}_t \\
x_t^{\ell+2} &= -\max(\bar{u}_t-x_t^{\ell+1},0)+\bar{u}_t.
\end{aligned}
\end{equation}
Then it is not hard to show that $x_t^{\ell+2} = u_t = \proj \left( \nnc(x_t) \right)$. In other words, we have embedded the projection operator~\eqref{eq::sat_control} as two additional layers into the neural network. 

Now, we derive the QCs for the ReLU activation function, $\phi(x) = \max(0,x), \ x \in \R^d$, which is the nonlinearity used in the hidden layers of the projected neural network. Note that for $d=1$, this function is on the boundary of the sector $[0,1]$.
More precisely, we can describe it by taking the intersection of three quadratic and/or affine constraints:
\begin{equation}
\label{relu::sector}
y_{i}=\max (0, x_{i}) \Leftrightarrow y_{i} \geq x_{i}, \ y_{i} \geq 0, \ y_{i}^{2}=x_{i} y_{i},
\end{equation}
for all $i=1,\cdots,d$.
In addition, the ReLU $\phi(x)$ is also slope-restricted on $[0,1]$ with repeated nonlinearity. This allows us to write \eqref{eq::QC::slope_res} with $\alpha=0$ and $\beta=1$,
\begin{equation}
\label{relu::slope}
(y_{j}-y_{i})^{2} \leq (y_{j}-y_{i})(x_{j}-x_{i}), \ \forall i \neq j.
\end{equation}
By taking a weighted sum of the constraints \eqref{relu::sector} and \eqref{relu::slope}, we obtain the single quadratic constraint,
\begin{equation}
\begin{aligned}
\label{relu::cvx_relax}
\sum_{i=1}^{d} &\lambda_{i}(y_{i}^{2}-x_{i} y_{i})+\nu_{i}(y_{i}-x_{i})+\eta_{i} y_{i}- \\
&\sum_{i \neq j} \lambda_{i j}\left((y_{j}-y_{i})^{2}-(y_{j}-y_{i})(x_{j}-x_{i})\right) \geq 0,
\end{aligned}
\end{equation}
which holds for any multipliers $(\lambda_{i}, \nu_{i}, \eta_{i}, \lambda_{i j}) \in \R \times \R_{+}^{3}$ and $i,j \in \{1,\cdots,d\}$.
The following lemma shows how to express the above constraint in a standard-form QC \eqref{eq::QC::def}.

\begin{lemma}[QC for ReLU function]
\label{lemma::QC_relu}
The ReLU function $\phi(x) = \max(0,x): \R^{d} \rightarrow \R^{d}$ satisfies the QC defined by
\begin{equation}
\label{eq::relu::QC}
\mathcal{Q}=\left\{Q \,\middle\vert\, Q=\begin{bmatrix}
0 & T & -\nu \\
T & -2T & \nu+\eta \\
-\nu^\top & \nu^\top + \eta^\top & 0
\end{bmatrix}\right\}.
\end{equation}
with basis $[x^\top \ \phi(x)^\top \ 1]^\top$.
Here, $\eta, \nu \geq 0$ and $T \in \Sym^d_+$ is given by
\begin{equation*}
T=\sum_{i=1}^{d} \lambda_{i} e_{i} e_{i}^{\top}+\sum_{i=1}^{d-1} \sum_{j>i}^{d} \lambda_{i j}(e_{i}-e_{j})(e_{i}-e_{j})^{\top},
\end{equation*}
where $\lambda_{ij} \geq 0$ and $e_i \in \R^d$ has $1$ in the $i$-th entry and $0$ everywhere else. 
\end{lemma}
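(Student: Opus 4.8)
The plan is to prove the lemma by a direct expansion of the quadratic form and a term-by-term match against the aggregated inequality \eqref{relu::cvx_relax}, which the discussion preceding the lemma has already certified to be valid (i.e. nonnegative) for the true ReLU map. Write $y=\phi(x)$ and let $z=[x^\top\ y^\top\ 1]^\top$ be the basis vector. Since the ReLU satisfies the exact relations $y_i\ge x_i$, $y_i\ge 0$, $y_i^2=x_iy_i$ in \eqref{relu::sector} and the slope restriction \eqref{relu::slope}, every summand appearing in \eqref{relu::cvx_relax} is individually nonnegative for the given multiplier signs, so \eqref{relu::cvx_relax} holds. The entire content of the lemma is therefore the algebraic identity asserting that $z^\top Q z$, for the $Q$ in \eqref{eq::relu::QC}, equals a positive multiple of the left-hand side of \eqref{relu::cvx_relax}.

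First I would carry out the block multiplication using the symmetry of $Q$ and of $T$. The two copies of $T$ in the $(1,2)$ and $(2,1)$ blocks, together with the $-2T$ in the $(2,2)$ block and the $\pm\nu$, $\nu+\eta$ entries, collapse the cross terms into the compact form $z^\top Q z = 2\bigl[(x-y)^\top T y + \nu^\top(y-x) + \eta^\top y\bigr]$. Next I would substitute the definition $T=\sum_i \lambda_i e_ie_i^\top + \sum_{i<j}\lambda_{ij}(e_i-e_j)(e_i-e_j)^\top$ into the term $(x-y)^\top T y$: the rank-one pieces $e_ie_i^\top$ generate the diagonal contributions $\lambda_i(x_iy_i-y_i^2)$, while each $(e_i-e_j)(e_i-e_j)^\top$ generates the pairwise contribution $\lambda_{ij}\bigl[(x_i-x_j)(y_i-y_j)-(y_i-y_j)^2\bigr]$. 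Collecting these reproduces exactly the quadratic part of \eqref{relu::cvx_relax}, up to the global factor $2$ and up to rewriting $\sum_{i\ne j}$ as $\sum_{i<j}$, which is harmless because each pairwise expression is symmetric under $i\leftrightarrow j$.

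Finally, I would read off the signs from the exact ReLU identities to conclude nonnegativity: the diagonal terms $\lambda_i(y_i^2-x_iy_i)$ vanish identically because $y_i^2=x_iy_i$ (which is precisely why $\lambda_i$ may range freely over $\R$), each slope term $-\lambda_{ij}\bigl[(y_i-y_j)^2-(x_i-x_j)(y_i-y_j)\bigr]$ is $\ge 0$ since $\lambda_{ij}\ge 0$ and the bracket is $\le 0$ by \eqref{relu::slope}, and the affine terms $\nu^\top(y-x)$ and $\eta^\top y$ are $\ge 0$ since $\nu,\eta\ge 0$ and $y-x\ge 0$, $y\ge 0$. Hence $z^\top Q z\ge 0$, which is the claimed QC. The only genuinely delicate point is the bookkeeping in the middle step, namely verifying that the rank-one decomposition of $T$ lines up the matrix entries with the sector and slope constraints and tracking the overall factor of $2$ and the sign conventions; once this matching is in place the result is immediate.
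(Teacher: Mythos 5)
Your proposal is correct, and the expansion it hinges on checks out: with $z=[x^\top\ y^\top\ 1]^\top$ and the $Q$ of \eqref{eq::relu::QC}, symmetry of $T$ gives $z^\top Q z = 2\bigl[(x-y)^\top T y + \nu^\top(y-x)+\eta^\top y\bigr]$, and substituting the rank-one decomposition of $T$ reproduces the terms of \eqref{relu::cvx_relax}, each of which is nonnegative (or identically zero, for the $\lambda_i$ terms) by \eqref{relu::sector} and \eqref{relu::slope}. The paper itself does not prove this lemma but defers entirely to the cited reference, so your argument supplies exactly the verification being outsourced, via the standard route of matching the quadratic form against the weighted sum of sector, positivity, and slope constraints; the only point worth stating explicitly in a final write-up is the sign bookkeeping you flag, namely that your diagonal contribution is $\lambda_i(x_iy_i-y_i^2)$ rather than $\lambda_i(y_i^2-x_iy_i)$, which is immaterial only because that term vanishes identically for the true ReLU.
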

\begin{proof}
See \cite{fazlyab2019safety}.
\end{proof}
\vspace{8pt}

\noindent As we will see in Section \ref{sec::SDP}, the $Q$ matrix in \eqref{eq::relu::QC} will appear as a decision variable in the SDP problem.
Note that there are many ways to refine \eqref{eq::relu::QC} to yield a tighter relaxation for a specific region in the state-space, such as using interval arithmetic~\cite{fazlyab2019safety,fazlyab2019probabilistic} or linear programming  \cite{kolter2017provable}. 
\section{Reach-SDP: Computing Forward Reachable Sets via Semidefinite Programming}
\label{sec::SDP}

In this section, we propose Reach-SDP, an optimization-based approach that uses the QC abstraction developed in the previous section to estimate the reachable set of the closed-loop system \eqref{eq::closed_loop_sys}.
%
%
Specifically, the $N$-step reachable set is estimated using the following recursive computations,
\begin{equation}
\begin{aligned}
\label{seq_SDP}
\ReachBar_{t+1}(\init) &= \operatorname{Reach\_{SDP}}\left(\ReachBar_{t}(\init)\right),
\end{aligned}
\end{equation}
for $t=0,\cdots,N-1$.
In the sequel, we discuss how to implement $\operatorname{Reach\_{SDP}}$ in detail.

\subsection{Change of Basis}
In the previous section, we have abstracted the initial set, the reachable set and the projected neural network with QCs or quadratic inequalities, each with a different basis vector.
For Reach-SDP, we unify those quadratic terms with the same basis vector:
\begin{equation}
\label{eq::seq_basis}
    [\bx^\top \ 1]^\top \! \coloneqq \! [{x_t^0}^\top \ {x_t^1}^\top \ \cdots \ {x_t^{\ell+2}}^\top \ 1]^\top \in \R^{n_x+n_n+2n_u+1},
\end{equation}
    where $n_n = \sum_{k=1}^{\ell} n_k$ is the total number of neurons 
    in the neural network.
Then, the unified QC is in the form: \begin{equation}
\label{eq::unified_QC}
\begin{bmatrix}
\bx \\
1
\end{bmatrix}^{\top} M 
\begin{bmatrix}
\bx \\
1
\end{bmatrix} \geq 0,
\end{equation}
where $M \in \Sym^{n_x+n_n+1}$.
The following lemma shows how to change the basis of a QC by a congruence transformation.

\begin{lemma}
\label{lemma::CoB}
The QC defined by $\mathcal{Q}= \left\{ Q \in \Sym^d \mid x_b^\top Q x_b \geq 0 \right\}$ with basis $x_b \in \R^d$ is equivalent to the QC defined by $\mathcal{M} = \left\{ M \in \Sym^n \mid \xi_b^\top M \xi_b \geq 0 \right\}$ with basis $\xi_b \in \R^n$, where $M = E^\top Q E$ and $E \xi_b = x_b$.
The matrix $E \in \R^{n \times d}$ is the change-of-basis matrix.
\end{lemma}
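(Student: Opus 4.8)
The plan is to reduce the statement to a single congruence identity. The two QC descriptions differ only in the basis vector that is plugged in, so I would show that the value of the quadratic form is left invariant by the prescribed change of basis, and then conclude that the two inequalities encode exactly the same feasibility condition.

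First I would substitute the definition $M = E^\top Q E$ into the quadratic form associated with $\mathcal{M}$ and use the defining relation $E\xi_b = x_b$:
\begin{equation}
\xi_b^\top M \xi_b = \xi_b^\top E^\top Q E \xi_b = (E\xi_b)^\top Q (E\xi_b) = x_b^\top Q x_b .
\end{equation}
Hence $\xi_b^\top M \xi_b \geq 0$ holds if and only if $x_b^\top Q x_b \geq 0$, so the two inequalities define identical constraints. I would also record that $M$ inherits symmetry from $Q$, since $(E^\top Q E)^\top = E^\top Q^\top E = E^\top Q E$, whence $M \in \Sym^n$ and the map $Q \mapsto E^\top Q E$ sends $\mathcal{Q}$ into $\mathcal{M}$. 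Because the quadratic form is preserved exactly, every $Q \in \mathcal{Q}$ corresponds to some $M = E^\top Q E \in \mathcal{M}$ describing the same half-space of the basis variables, which is precisely the claimed equivalence.

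The computation itself is immediate; the only real care is in the bookkeeping, and that is where I expect whatever mild difficulty there is to sit. I would make explicit what ``equivalence of the two QCs'' means, namely the correspondence $Q \leftrightarrow M = E^\top Q E$ together with the pointwise identity above, and note that this correspondence need not be injective (when $E$ lacks full column rank) without affecting the argument. I would also reconcile dimensions: for $E\xi_b = x_b$ with $x_b \in \R^d$, $\xi_b \in \R^n$, and for $M = E^\top Q E \in \Sym^n$, the change-of-basis matrix must map the unified basis onto the native one, i.e. $E \in \R^{d \times n}$ so that $E^\top Q E$ is $n \times n$ (the statement's $E \in \R^{n \times d}$ reads as a transposed-dimension typo). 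Finally, the reason $E\xi_b = x_b$ holds as an identity rather than only at one point is structural: the unified basis $\xi_b = [\bx^\top\ 1]^\top$ contains every component that appears in each native basis $x_b$, so $E$ is simply the fixed selection-and-combination matrix that extracts those components, and no further argument is needed.
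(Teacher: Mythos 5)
Your proof is correct and follows essentially the same one-line congruence argument as the paper: substitute $E\xi_b = x_b$ into the quadratic form to get $\xi_b^\top E^\top Q E\, \xi_b = x_b^\top Q x_b$. Your added remarks on symmetry of $M$ and on the dimension of $E$ (which indeed must be $d \times n$ for $E\xi_b = x_b$ and $E^\top Q E \in \Sym^n$ to be consistent, so the statement's $\R^{n\times d}$ is a typo) are correct but not needed for the core argument.
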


\begin{proof}
Substitute $E \xi_b = x_b$ into $x_b^\top Q x_b \geq 0$ and we have $x_b^\top E^\top Q E x_b = \xi_b^\top M \xi_b \geq 0$.
\end{proof}

\begin{proposition}
\label{prop::in}
The QC defined by $\mathcal{P}$ in \eqref{eq::QC::poly} or \eqref{eq::QC::ellips}, satisfied by the initial set $\init$, is equivalent to the QC defined by $\cMin = \left\{ \Min(P) \in \Sym^{n_x+n_n+1} \mid \Min(P) = \Ein^\top P \Ein \geq 0 \right\}$ with basis $[\bx^\top \ 1]^\top$.
The change-of-basis matrix is
\begin{equation}
\label{eq::seq_QC_input_CoB}
\Ein = 
\begin{bmatrix}
I_{n_x} &0 &\cdots &0 &0 \\
0 &0 &\cdots &0 &1
\end{bmatrix}.
\end{equation}
\end{proposition}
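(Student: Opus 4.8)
The plan is to treat this proposition not as a fresh computation but as a direct instance of the change-of-basis result, Lemma~\ref{lemma::CoB}. We already know that $\init$ satisfies the QC defined by $\mathcal{P}$ (either \eqref{eq::QC::poly} or \eqref{eq::QC::ellips}) in the short basis $x_b = [x^\top\ 1]^\top \in \R^{n_x+1}$. What we want is the \emph{same} family of quadratic inequalities re-expressed in the unified basis $\xi_b = [\bx^\top\ 1]^\top$ of \eqref{eq::seq_basis}, whose first block is the plant state $x_t^0$ and whose last entry is the constant $1$. Lemma~\ref{lemma::CoB} reduces this to exhibiting a single matrix $\Ein$ with $\Ein\,\xi_b = x_b$, after which the congruence $\Min(P) = \Ein^\top P \Ein$ furnishes the equivalent QC in the new basis automatically.

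The crux is therefore to check that the matrix in \eqref{eq::seq_QC_input_CoB} is a valid change-of-basis matrix, i.e.\ that $\Ein\,\xi_b = x_b$ holds identically in $x$. Here I would invoke the defining feature of the recursion \eqref{eq::proj_NN}: its zeroth layer is a verbatim copy of the state, $x_t^0 = x_t = x$. Reading off \eqref{eq::seq_QC_input_CoB}, the top row-block $[\,I_{n_x}\ 0\ \cdots\ 0\ 0\,]$ extracts $x_t^0 = x$ from $\xi_b$ while the bottom row $[\,0\ \cdots\ 0\ 1\,]$ extracts the trailing $1$, so that
\[
\Ein\,\xi_b = \begin{bmatrix} x_t^0 \\ 1 \end{bmatrix} = \begin{bmatrix} x \\ 1 \end{bmatrix} = x_b .
\]
This step is pure bookkeeping: one only confirms that the identity block and the lone $1$ are placed so as to select the state and the constant while annihilating every intermediate activation block $x_t^1,\dots,x_t^{\ell+2}$.

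With that identity established, applying Lemma~\ref{lemma::CoB} with $Q = P$ and $E = \Ein$ shows that for each $P \in \mathcal{P}$ the inequality $x_b^\top P\, x_b \geq 0$ is equivalent to $\xi_b^\top (\Ein^\top P \Ein)\, \xi_b \geq 0$; letting $P$ range over $\mathcal{P}$ identifies the QC defined by $\mathcal{P}$ in basis $x_b$ with the QC defined by $\cMin$ in basis $\xi_b$, which is exactly the claim. I do not anticipate a genuine obstacle: the only care required is dimensional bookkeeping, ensuring $\Ein$ has $n_x+1$ rows (the length of $x_b$) and as many columns as $\xi_b$ has entries per \eqref{eq::seq_basis}, so that $\Min(P)$ is indexed by the unified basis. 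Crucially, no new side condition on $\Min(P)$ need be imposed, since the admissibility of each $\Min(P)$ as a QC matrix is inherited directly from $P \in \mathcal{P}$ through the congruence.
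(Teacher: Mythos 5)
Your proposal is correct and takes essentially the same route as the paper: Proposition~\ref{prop::in} is stated there without a separate proof precisely because it is an immediate application of Lemma~\ref{lemma::CoB}, and your verification that $\Ein\,[\bx^\top \ 1]^\top = [x^\top \ 1]^\top$ (using $x_t^0 = x_t$ to select the state block and the trailing $1$) is exactly the bookkeeping that justifies the congruence $\Min(P) = \Ein^\top P \Ein$. Nothing further is needed.
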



\begin{proposition}
\label{prop::out}
The quadratic inequalities \eqref{QI::reach_set} defined by matrices $S_i$ in \eqref{QI::reach_set_poly} or \eqref{QI::reach_set_ellips}, describing the candidate set $\ReachBar(\init)$, are equivalent to the quadratic inequalities defined by matrices $\Mout(S_i) = \Eout^\top S_i \Eout$ with basis $[\bx^\top \ 1]^\top$.
The change-of-basis matrix is
\begin{equation}
\label{eq::seq_QC_input_CoB}
\Eout = 
\begin{bmatrix}
A_t &0 &\cdots &0 &B_t &c_t \\
0   &0 &\cdots &0 &0 &1
\end{bmatrix}.
\end{equation}
\end{proposition}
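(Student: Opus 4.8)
The plan is to reduce the claim to a single application of the change-of-basis lemma (Lemma \ref{lemma::CoB}), after first verifying that the matrix $\Eout$ sends the unified basis vector to the next-state basis vector $[x_{t+1}^\top \ 1]^\top$. The entire content of the proposition is that the defining quadratic form of each facet or ellipsoid $S_i$, originally written in the output coordinate, becomes $\Eout^\top S_i \Eout$ once re-expressed in the unified coordinate \eqref{eq::seq_basis}, and that this re-expression is an \emph{exact} identity, hence an equivalence rather than merely a further over-approximation.

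First I would write out the unified basis $[\bx^\top \ 1]^\top = [{x_t^0}^\top \ \cdots \ {x_t^{\ell+2}}^\top \ 1]^\top$ and recall that its first block is $x_t^0 = x_t$ (the current state) and its last state block is $x_t^{\ell+2} = u_t = \proj(\nnc(x_t))$ (the realized control input, by the embedding \eqref{eq::proj_NN}). Multiplying $\Eout$ against this vector, the top block row selects $A_t x_t^0 + B_t x_t^{\ell+2} + c_t$ — all intermediate neuron blocks are killed by the zero columns — which by the system dynamics \eqref{eq::LTV} equals $x_{t+1}$, while the bottom row returns the constant $1$. Thus $\Eout [\bx^\top \ 1]^\top = [x_{t+1}^\top \ 1]^\top$ identically.

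With this identity in hand, I would invoke Lemma \ref{lemma::CoB} with old basis $x_b = [x_{t+1}^\top \ 1]^\top$, quadratic-form matrix $Q = S_i$, unified basis $\xi_b = [\bx^\top \ 1]^\top$, and change-of-basis matrix $E = \Eout$. Substituting $\Eout \xi_b = x_b$ into $x_b^\top S_i x_b$ yields $\xi_b^\top \Eout^\top S_i \Eout \xi_b = \xi_b^\top \Mout(S_i) \xi_b$, so the two scalar quadratic forms coincide for every admissible $\xi_b$. Consequently a point satisfies the $i$-th defining inequality in output coordinates exactly when the corresponding unified vector satisfies the transformed inequality, and intersecting over $i = 1,\dots,m$ reproduces $\ReachBar(\init)$.

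The one place to be careful — and the only real deviation from a verbatim quotation of Lemma \ref{lemma::CoB} — is the direction of the inequality: the reachable-set quadratic inequalities \eqref{QI::reach_set} are written as $\leq 0$, whereas Lemma \ref{lemma::CoB} is phrased for $\geq 0$. This is harmless, because the congruence step produces an exact equality of the two quadratic forms rather than a mere sign-preserving bound, so the equivalence is insensitive to which inequality symbol appears; the same argument applies verbatim to the ellipsoidal $S_i$ in \eqref{QI::reach_set_ellips}. I expect no genuine obstacle here — the work is purely bookkeeping of the block structure of $\Eout$ (ensuring the $A_t$, $B_t$, and $c_t$ blocks land in the state, control, and constant columns respectively), together with the observation that the dynamics \eqref{eq::LTV} are encoded exactly, not approximately, by $\Eout$.
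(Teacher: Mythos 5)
Your proposal is correct and is essentially the argument the paper intends: the paper states Proposition \ref{prop::out} without an explicit proof, but the intended justification is exactly the one you give — verify that $\Eout$ maps the unified basis $[\bx^\top\ 1]^\top$ to $[x_{t+1}^\top\ 1]^\top$ via the dynamics \eqref{eq::LTV} (using $x_t^0 = x_t$ and $x_t^{\ell+2} = u_t$) and then apply Lemma \ref{lemma::CoB}, mirroring the appendix proof of Proposition \ref{prop::mid}. Your remark that the $\leq 0$ versus $\geq 0$ orientation is immaterial because the congruence is an exact identity of quadratic forms is a correct and worthwhile clarification.
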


\begin{proposition}
\label{prop::mid}
The QC defined by $\mathcal{Q}$ in \eqref{eq::relu::QC}, satisfied by the neural network controller $\nnc(x_t)$ in \eqref{eq::NN}, is equivalent to the QC defined by $\cMmid = \left\{ \Mmid(Q) \in \Sym^{n_x+n_n+1} \mid \Mmid(Q) = \Emid^\top Q \Emid \geq 0 \right\}$ with basis $[\bx^\top \ 1]^\top$.
The change-of-basis matrix is
\begin{equation}
\label{eq::seq_QC_input_CoB}
\Emid = 
\begin{bmatrix}
A &a \\
B &b\\
0 &1
\end{bmatrix}.
\end{equation}
where
\begin{equation}
\begin{aligned}
&A = \begin{bmatrix}
W^0 &\cdots &0 &0 &0 &0 \\
\vdots &\ddots &\vdots &\vdots &\vdots &\vdots \\
0 &\cdots &W^{\ell-1} &0  &0 &0 \\
0 &\cdots &0 &W^\ell &0 &0 \\
0 &\cdots &0 &0  &-I_{n_u} &0
\end{bmatrix}
&&a = \begin{bmatrix}
b^0 \\
\vdots \\
b^{\ell-1} \\
b^{\ell} -\underline{u}_t \\
\bar{u}_t
\end{bmatrix} \\[0.1cm]
&B = \begin{bmatrix}
0 &I_{n_1} &\cdots &0 &0 &0 \\
\vdots &\vdots &\ddots &\vdots &\vdots &\vdots \\
0 &0 &\cdots &I_{n_\ell}  &0 &0 \\
0 &0 &\cdots &0 &I_{n_u} &0 \\
0 &0 &\cdots &0  &0 &-I_{n_u}
\end{bmatrix}
&&b = \begin{bmatrix}
0 \\
\vdots \\
0 \\
-\underline{u}_t \\
\bar{u}_t
\end{bmatrix}.
\end{aligned}
\end{equation}
\end{proposition}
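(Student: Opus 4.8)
The plan is to recognize this proposition as a direct application of the change-of-basis result in Lemma \ref{lemma::CoB}, with $Q$ the ReLU QC matrix from Lemma \ref{lemma::QC_relu} and $E = \Emid$. Concretely, I would take $x_b = [x^\top \ \phi(x)^\top \ 1]^\top$ to be the local ReLU basis and $\xi_b = [\bx^\top \ 1]^\top$ the unified basis, and then verify the single defining identity $\Emid \, \xi_b = x_b$. Once that identity is established, Lemma \ref{lemma::CoB} immediately gives $\Mmid(Q) = \Emid^\top Q \Emid$ together with the claimed equivalence of the two QCs, so no further work is needed.

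The substance of the argument is therefore in correctly identifying the \emph{inputs} $x$ and \emph{outputs} $\phi(x)$ of the aggregate ReLU map for the network augmented by the two projection layers in \eqref{eq::proj_NN}, and checking that they are reproduced by the affine maps $x = A\bx + a$ and $\phi(x) = B\bx + b$. I would read these off the recursion layer by layer. For the hidden layers $k = 0, \ldots, \ell-1$, the pre-activation is $W^k x_t^k + b^k$ and the post-activation is $x_t^{k+1} = \max(W^k x_t^k + b^k, 0)$; these are exactly the rows of $A\bx + a$ and $B\bx + b$ carrying $W^k$ (resp.\ $I_{n_{k+1}}$) in the $x_t^k$ (resp.\ $x_t^{k+1}$) block, with the bias $b^k$ living in $a$ and a zero in the corresponding entry of $b$. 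The only rows needing care are the two projection layers, where the offsets $\underline{u}_t, \bar{u}_t$ and a sign flip appear.

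For the first projection layer, the recursion writes $x_t^{\ell+1} = \max(W^\ell x_t^\ell + b^\ell - \underline{u}_t, 0) + \underline{u}_t$, so I would take its ReLU input to be $W^\ell x_t^\ell + b^\ell - \underline{u}_t$ (matching the $W^\ell$ row of $A$ together with the entry $b^\ell - \underline{u}_t$ of $a$) and its ReLU output to be $x_t^{\ell+1} - \underline{u}_t$ (matching the $I_{n_u}$ row of $B$ with the entry $-\underline{u}_t$ of $b$). For the second projection layer, $x_t^{\ell+2} = -\max(\bar{u}_t - x_t^{\ell+1}, 0) + \bar{u}_t$, so the ReLU input is $\bar{u}_t - x_t^{\ell+1}$ (the $-I_{n_u}$ row of $A$ with entry $\bar{u}_t$ in $a$) and the output $\max(\cdot,0) = \bar{u}_t - x_t^{\ell+2}$ (the $-I_{n_u}$ row of $B$ with entry $\bar{u}_t$ in $b$). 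Stacking all blocks yields $A\bx + a = x$ and $B\bx + b = \phi(x)$, and the trailing row $[0 \ 1]$ of $\Emid$ reproduces the constant $1$, establishing $\Emid \, \xi_b = x_b$.

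The main obstacle is precisely this bookkeeping for the projection layers: one must confirm that each affine, possibly sign-flipped, $\max$-expression in \eqref{eq::proj_NN} can be rewritten as a genuine ReLU $\max(\cdot,0)$ applied to an argument that is affine in $\bx$, with the shifts $\pm\underline{u}_t, \bar{u}_t$ absorbed consistently into both $a$ (on the input side) and $b$ (on the output side), and that the sign flip in the second layer is carried by the $-I_{n_u}$ block of $B$ together with the corresponding $-I_{n_u}$ block of $A$. After this consistency check, the conclusion follows mechanically from Lemma \ref{lemma::CoB}, since the congruence transformation preserves the inequality $x_b^\top Q x_b \geq 0$ under the substitution $x_b = \Emid \, \xi_b$.
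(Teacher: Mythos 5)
Your proposal is correct and follows essentially the same route as the paper: the paper's proof likewise rewrites the projected network \eqref{eq::proj_NN} compactly as $B\bx + b = \phi(A\bx + a)$, applies Lemma \ref{lemma::QC_relu} with basis $[(A\bx+a)^\top \ (B\bx+b)^\top \ 1]^\top$, and invokes Lemma \ref{lemma::CoB} via the congruence $\Emid[\bx^\top\ 1]^\top$. Your layer-by-layer verification of the projection rows (with the shifts $\pm\underline{u}_t,\bar{u}_t$ and the sign flip) is exactly the bookkeeping the paper leaves implicit, and it checks out.
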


\begin{proof}
See Appendix \ref{pf::CoB_mid}.
\end{proof}


\subsection{Over-Approximating the One-Step Reachable Set}

In the next theorem, we state our main result for over-approximating the one-step reachable set $\Reach(\init)$ for the closed-loop system \eqref{eq::closed_loop_sys}.

\begin{theorem}[SDP for one-step reachable set]
\label{thm::seq_sdp}
Consider the closed-loop system \eqref{eq::closed_loop_sys}.
Suppose the initial set $\init$ and the projected neural network controller $\proj(\nnc(\cdot))$ satisfy the QCs defined by $\cMin$ and $\cMmid$, respectively, as in Proposition \ref{prop::in} and \ref{prop::mid}.
Let $\Mout(S)$ describe a candidate set $\ReachBar(\init) \subseteq \R^{n_x}$ as in Proposition \ref{prop::out} with $S=S_i \in \Sym^{n_x+1}$.
If the following LMI
\begin{equation}
\label{eq::lmi::seq_sdp}
    \Min(P) + \Mmid(Q) + \Mout(S) \preceq 0,
\end{equation}
is feasible for some matrices $(P,Q,S) \in \mathcal{P} \times \mathcal{Q} \times \Sym^{n_x+1}$, then $\Reach(\init) \subseteq \ReachBar(\init)$.
\end{theorem}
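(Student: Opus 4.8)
The plan is to run the standard S-procedure (Lagrangian relaxation) argument in the lifted basis $[\bx^\top\ 1]^\top$ of \eqref{eq::seq_basis}. First I would fix an arbitrary $x_0 \in \init$ and propagate it through the closed-loop map: set $x_t^0 = x_0$, compute the hidden activations $x_t^1,\dots,x_t^\ell$ via the ReLU recursion \eqref{eq::proj_NN}, and then the two projection layers $x_t^{\ell+1}, x_t^{\ell+2}=u_t=\proj(\nnc(x_0))$. Stacking these into the basis vector yields a concrete $\bx$ for which $\Eout[\bx^\top\ 1]^\top = [x_{t+1}^\top\ 1]^\top$ with $x_{t+1}=f_\pi(x_0)$, by the definition of $\Eout$ in Proposition \ref{prop::out}.

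The crucial observation is that, because this $\bx$ comes from a genuine trajectory, both QCs hold simultaneously at it. Its leading block is $x_0\in\init$, so Proposition \ref{prop::in} gives $[\bx^\top\ 1]\Min(P)[\bx^\top\ 1]^\top\geq 0$ for every $P\in\mathcal{P}$; its interior blocks satisfy the ReLU and projection relations, so Proposition \ref{prop::mid} gives $[\bx^\top\ 1]\Mmid(Q)[\bx^\top\ 1]^\top\geq 0$ for every $Q\in\mathcal{Q}$.

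Next I would form a congruence of the feasible LMI \eqref{eq::lmi::seq_sdp} with this basis vector, i.e.\ pre- and post-multiply by $[\bx^\top\ 1]$ and $[\bx^\top\ 1]^\top$. Since the matrix sum is negative semidefinite, the resulting scalar is nonpositive:
\[
[\bx^\top\ 1]\Min(P)[\bx^\top\ 1]^\top + [\bx^\top\ 1]\Mmid(Q)[\bx^\top\ 1]^\top + [\bx^\top\ 1]\Mout(S)[\bx^\top\ 1]^\top \leq 0.
\]
The first two summands are nonnegative by the previous step, so the third must satisfy $[\bx^\top\ 1]\Mout(S)[\bx^\top\ 1]^\top\leq 0$. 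By Proposition \ref{prop::out} this equals $[x_{t+1}^\top\ 1]S[x_{t+1}^\top\ 1]^\top\leq 0$, which is exactly the defining quadratic inequality \eqref{QI::reach_set} for membership in $\ReachBar(\init)$. Hence $f_\pi(x_0)=x_{t+1}\in\ReachBar(\init)$, and since $x_0\in\init$ was arbitrary, $\Reach(\init)=f_\pi(\init)\subseteq\ReachBar(\init)$.

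I expect the main subtlety to be bookkeeping rather than a deep difficulty: one must be certain that the single lifted vector $\bx$ satisfies both QCs at once, which is precisely why it must be built from an actual closed-loop trajectory, with its first block in $\init$ and its interior blocks consistent with the network recursion. A secondary point is that $\ReachBar(\init)$ in \eqref{QI::reach_set} is an intersection of $m$ quadratic inequalities, so the argument must be applied to each $S_i$ in turn (allowing the multipliers $P,Q$ to depend on $i$), after which the conclusion follows by intersecting over $i$.
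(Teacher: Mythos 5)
Your proposal is correct and follows essentially the same S-procedure argument as the paper: congruence of the LMI with the lifted trajectory vector $[\bx^\top\ 1]^\top$, nonnegativity of the $\Min$ and $\Mmid$ terms forcing the $\Mout$ term to be nonpositive, and Proposition \ref{prop::out} translating that into membership in $\ReachBar(\init)$. Your added care in explicitly constructing $\bx$ from a genuine closed-loop trajectory (so both QCs hold at the same point) and in treating each $S_i$ separately makes the bookkeeping slightly more explicit than the paper's version, but it is the same proof.
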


\begin{proof}
Since the initial set $\init$ satisfies the QC defined by $\cMin$, we have,
\begin{equation}
\label{eq::lmi::seq_sdp::pf::in}
    \begin{bmatrix} \bx \\ 1 \end{bmatrix}^\top \Min \begin{bmatrix} \bx \\ 1 \end{bmatrix} \geq 0,
\end{equation}
for all $x_0 \in \init$.
Similarly, the projected neural network controller $\proj(\nnc(\cdot))$ satisfying the QC defined by $\cMmid$ implies that,
\begin{equation}
\label{eq::lmi::seq_sdp::pf::mid}
    \begin{bmatrix} \bx \\ 1 \end{bmatrix}^\top \Mmid \begin{bmatrix} \bx \\ 1 \end{bmatrix} \geq 0, \quad \forall \bx \in \R^{n_x+n_n}.
\end{equation}
By left- and right-multiplying both sides of \eqref{eq::lmi::seq_sdp} by $[\bx^\top \ 1]$ and $[\bx^\top \ 1]^\top$, the basis vector in \eqref{eq::seq_basis}, we have
\begin{equation}
\label{eq::lmi::seq_sdp::pf::S_procedure}
    \begin{bmatrix} \bx \\ 1 \end{bmatrix}^\top \Min \begin{bmatrix} \bx \\ 1 \end{bmatrix} + \begin{bmatrix} \bx \\ 1 \end{bmatrix}^\top \Mmid \begin{bmatrix} \bx \\ 1 \end{bmatrix} + \begin{bmatrix} \bx \\ 1 \end{bmatrix}^\top \Mout \begin{bmatrix} \bx \\ 1 \end{bmatrix}
    \leq 0.
\end{equation}
The first two quadratic terms in \eqref{eq::lmi::seq_sdp::pf::S_procedure} are nonnegative for any $x_0 \in \init$ by \eqref{eq::lmi::seq_sdp::pf::in} and \eqref{eq::lmi::seq_sdp::pf::mid}, respectively.
Consequently, the last quadratic term must be nonpositive for all $x_0 \in \init$,
\begin{equation}
\label{eq::lmi::seq_sdp::pf::out}
    \begin{bmatrix} \bx \\ 1 \end{bmatrix}^\top \Mout \begin{bmatrix} \bx \\ 1 \end{bmatrix} \leq 0.
\end{equation}
By Proposition \ref{prop::out}, the above condition is equivalent to
\begin{equation}
\label{eq::lmi::seq_sdp::pf::out_2}
    \begin{bmatrix} y \\ 1 \end{bmatrix}^\top S \begin{bmatrix} y \\ 1 \end{bmatrix} \leq 0,
\end{equation}
for all $\bar{y} \in \{ y \mid y = f_\pi(x_0), \ x_0 \in \init \} = \Reach(\init)$.
Recall from \eqref{QI::reach_set} that the set of all points $y$ that satisfies \eqref{eq::lmi::seq_sdp::pf::out_2} is the candidate set $\ReachBar(\init)$.
Therefore, we conclude that $\ReachBar(\init)$ must be a superset of the exact one-step reachable set $\Reach(\init)$, i.e. $\Reach(\init) \subseteq \ReachBar(\init)$.
\end{proof}


\subsection{Minimum-Volume Approximate Reachable Set}
\label{subsec::min-vol}
Theorem \ref{thm::seq_sdp} provides a sufficient condition for over-approximating the one-step reachable set $\Reach(\init)$.
Now we can use this result to reformulate problem \eqref{eq::min_vol}, which finds a minimum-volume approximate reachable set $\ReachBar(\init)$.

\vspace{1pt}

If the approximate reachable set is parametrized by a polytope as in \eqref{QI::reach_set_poly}, it is difficult to find a minimum-volume polytope directly.
However, given a matrix $A \in \R^{m \times n_x}$ that describes the facets of the polytope, we can solve the following SDP problem,
\begin{equation}
\label{sdp::poly}
\displaystyle\minimize_{P \in \mathcal{P}, \ Q \in \mathcal{Q}, \ b_i \in \R} b_i \quad \text{subject to \eqref{eq::lmi::seq_sdp}}, 
\end{equation}
for all $i = 1, \cdots, m$.
For each fixed facet $a_i^\top \in \R^{n_x}$ in $A$, SDP \eqref{sdp::poly} finds the tightest halfspace $\{ y \mid a_i y \leq b_i\}$ that contains $\Reach(\init)$.
Finally, the polytopic approximate reachable set is given by the intersection of those halfsapces, i.e. $\ReachBar(\init) = \{ y \mid Ay \leq b \}$, as in Section \ref{sec::eq::QC::reach_set}.

\vspace{1pt}

If the approximate reachable set is parametrized by an ellipsoid as in \eqref{QI::reach_set_ellips}, we can easily obtain a minimum-volume ellipsoid that encloses $\Reach(\init)$ by solving,
\begin{equation}
\label{sdp::ellips}
\displaystyle\minimize_{ \substack{P \in \mathcal{P}, \ Q \in \mathcal{Q}, \\[1pt] A \in \Sym^{n_x}, \ b \in \R^{n_x}}} -\operatorname{log \ det}(A) \quad \text{subject to \eqref{eq::lmi::seq_sdp}}. 
\end{equation}
Note that \eqref{eq::lmi::seq_sdp} is not convex in $A$ and $b$.
Nonetheless we can find a convex constraint equivalent to \eqref{eq::lmi::seq_sdp} using Schur complement.
See~\cite{fazlyab2019probabilistic} for detail.

\begin{remark}
In~\cite{dutta2017output}, an MILP-based approach is proposed for estimating forward reachable sets of dynamical systems in closed-loop with neural network controllers.
If the facets are given with the same directions as the ones of the true reachable sets, then the estimated reachable sets are exact.
However, this method only works for polytopic reachable sets and does not scale well with the size of the neural network, the volume, and the dimension of the initial set.
\end{remark}
\section{Numerical Experiments}
In this section, we demonstrate our approach with two application examples.
The controllers used to generate training data were implemented in YALMIP~\cite{lofberg2004yalmip}.
All neural network controllers were trained with ReLU activation functions and the Adam algorithm in PyTorch.
We used MATLAB, CVX~\cite{grant2009cvx} and Mosek~\cite{mosek} to solve the Reach-SDP problems.

\subsection{Double Integrator}
\label{sec::exp::DI}
We first consider a double integrator system
\begin{equation}
\label{eq::DI}
x_{t+1}=\underbrace{\begin{bmatrix} 1 &1 \\ 0 &1\end{bmatrix}}_A x_t+\underbrace{\begin{bmatrix} 0.5 \\ 1\end{bmatrix}}_B u_t
\end{equation}
discretized with sampling time $t_s = 1$s and subject to state and input constraints, $\avoid^\complement = [-5,5] \times [-1,1]$ and $\mathcal{U} = [-1,1]$, respectively.
We implemented a standard linear MPC following~\cite{borrelli2017predictive} with a prediction horizon $N_\mpc = 10$, weighting matrices $Q = I_2$, $R = 1$, the terminal region $\mathcal{O}^{LQR}_{\infty}$ and the terminal weighting matrix $P_{\infty}$ synthesized from the discrete-time algebraic Riccati equation.
The MPC is designed as a stabilizing controller which steers the system to the origin while satisfying the constraints.
We then use the MPC to generate 2420 samples of state and input pairs $(x, \pi_\mpc(x))$ for learning.
The neural network has 2 hidden layers with 10 and 5 neurons, respectively.
Our goal is to verify if all initial states in $\init = [2.5,3] \times [-0.25,0.25]$ can reach the set $\goal = [-0.25,0.25] \times [-0.25,0.25]$, a region near the origin, in $N = 6$ steps while avoiding $\avoid$ at all times.
We computed the minimum-volume polytopic approximate reachable sets $\ReachBar^1(\init),\cdots,\ReachBar^6(\init)$ using the Reach-SDP introduced in Section \ref{sec::SDP}.
The matrix that determines the direction of the facets is chosen as
\begin{equation*}
    A_{\text{in}} = \begin{bmatrix} 
       1 &-1 &0 &0 &1 &-1 &1 &-1
    \\ 0 &0 &1 &-1 &-1 &1 &1 &-1\end{bmatrix}^\top
\end{equation*}
As shown in Figure \ref{fig::DI_1}, our approach yielded a tight outer-approximation of the reachable sets and successfully verified the safety properties sought.

\begin{figure}[!hbtp]
    \centering
    \includegraphics[width=0.7\columnwidth]{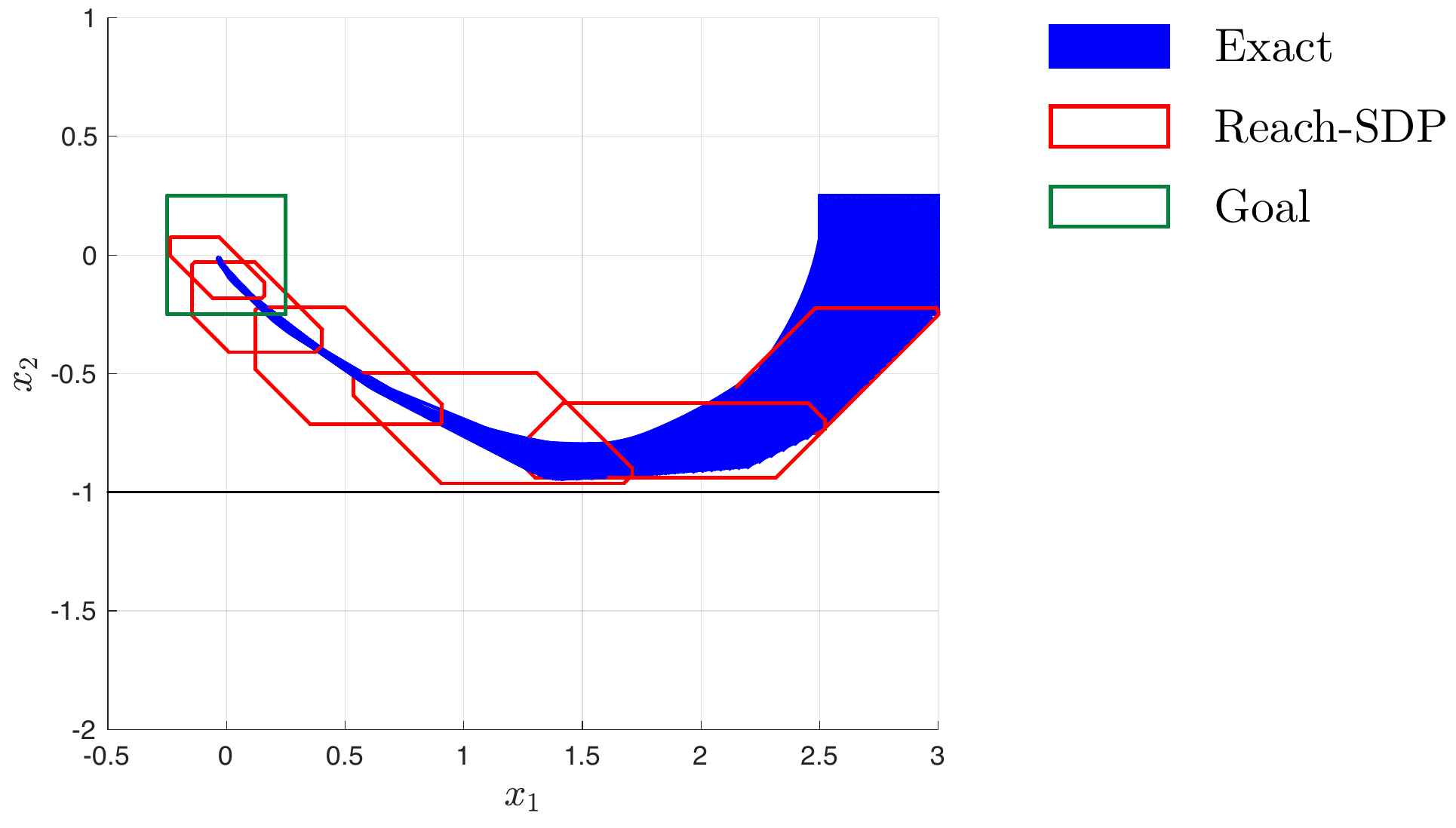}
    \caption{\label{fig::DI_1}  Illustration of the exact reachable sets (blue), the over-approximated reachable sets computed by Reach-SDP (red) and the goal set (green) of the double integrator.
    The solid black line represents the state constraint $x_2 \geq -1$.} 
\end{figure}



\subsection{6D Quadrotor}
\label{sec::exp::quad}

In the second example, we apply Reach-SDP to the 6D quadrotor model in \cite{ARCH19:Verification_of_Closed_loop_Systems}.
In order to have a linear model as in \eqref{eq::LTV}, we rewrite the nonlinear quadrotor dynamics in \cite{ARCH19:Verification_of_Closed_loop_Systems} as follows,
\begin{equation}
\label{eq::quad}
\dot{x}=\underbrace{\begin{bmatrix}
0_{3 \times 3} & I_3 \\
0_{3 \times 3} & 0_{3 \times 3}
\end{bmatrix}}_{A} x+ \underbrace{\begin{bmatrix}
0 & 0 & 0 \\
0 & 0 & 0 \\
0 & 0 & 0 \\
g & 0 & 0 \\
0 & -g & 0 \\
0 & 0 & 1
\end{bmatrix}}_{B} \underbrace{\begin{bmatrix}
\tan (\theta) \\
\tan (\phi) \\
\tau
\end{bmatrix}}_{u} +\underbrace{\begin{bmatrix}
0_{5 \times 1} \\
-g
\end{bmatrix}}_{c}
\end{equation}
where $g$ is the gravitational acceleration, the state vector $x = [p_x,p_y,p_z,v_x,v_y,v_z]^\top$ include positions and velocities of the quadrotor in the 3D space and the control vector $u$ is a function of $\theta$ (pitch), $\phi$ (roll) and $\tau$ (thrust), which are the control inputs of the original model.
The control task is to steer the quadrotor to the origin while respecting the state constraints $\avoid^\complement = [-5,5] \times [-5,5] \times [-5,5] \times [-1,1] \times [-1,1] \times [-1,1]$ and the actuator constraints $[\theta, \phi, \tau]^\top \in [-\pi/9,\pi/9] \times [-\pi/9,\pi/9] \times [0,2g]$.
We implemented a nonlinear MPC with a prediction horizon $N_\mpc = 30$, a least-squares objective function that penalizes both states and inputs with weighting matrices $Q = I_6$, $R = I_4$ and the terminal constraint $x_{N_\mpc} = 0$.
We used the original nonlinear dynamics in \cite{ARCH19:Verification_of_Closed_loop_Systems} as the prediction model in MPC, which is discretized with a sampling time $t_s = 0.1$s using the Runge–Kutta 4th order method.
The nonlinear MPC problems were solved using SNOPT~\cite{GilMS05}.
A total number of 4531 feasible samples of state and input pairs $(x, \pi_\mpc(x))$ were generated and used to train a neural network with 2 hidden layers and 32 neurons in each layer.
The initial set is given as an ellipsoid $\init = \mathcal{E}(q_{0}, Q_{0})$, where $q_{0}=[4.7 \ 4.7 \ 3 \ 0.95 \ 0 \ 0]^{\top}$ is the center and $Q_{0}=\operatorname{diag}(0.05^{2}, 0.05^{2}, 0.05^{2}, 0.01^{2}, 0.01^{2}, 0.01^{2})$ is the shape matrix.
Here, we want to verify if all initial states in $\init$ can reach the set $\goal = [3.7,4.1] \times [2.5,3.5] \times [1.2,2.6]$, which is defined in the $(p_x,p_y,p_z)$-space, in $t = 1$ second subject to the state and input constraints. 
We approximated the ellipsoidal forward reachable sets $\ReachBar^1(\init),\cdots,\ReachBar^{10}(\init)$ using the Reach-SDP.
The resulting approximate reachable sets are plotted in Figure \ref{fig::quad_xy} and \ref{fig::quad_xz} which shows that our method is able to verify the given reach-avoid specifications.

\begin{figure}[!hbtp]
    \centering
    \includegraphics[width=0.7\columnwidth]{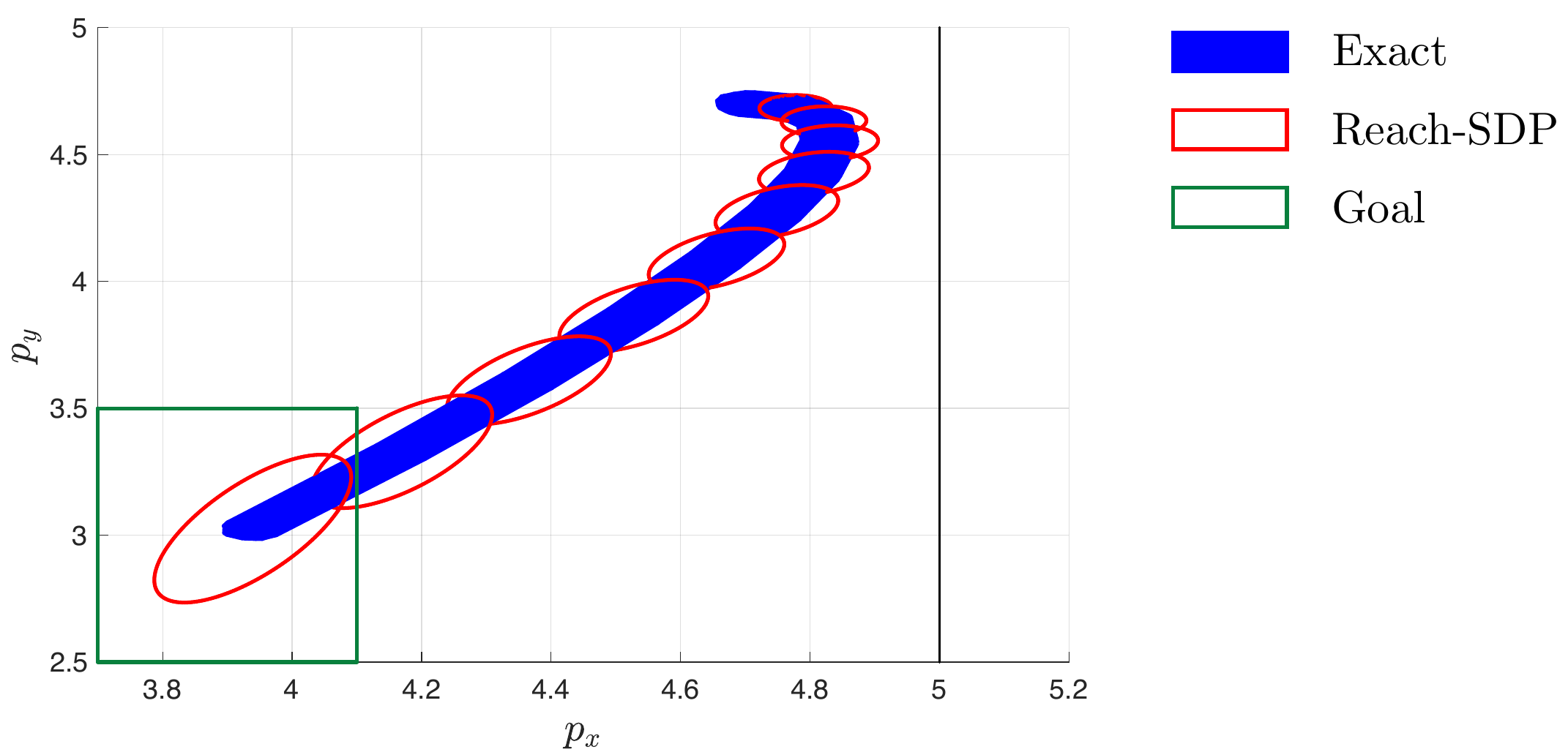}
    \caption{\label{fig::quad_xy}  Illustration of the exact and approximate reachable sets of the quadrotor system in the $(p_x,p_y)$-space. The solid black line represents the state constraint $p_x \leq 5$.} 
\end{figure}

\begin{figure}[!hbtp]
    \centering
    \includegraphics[width=0.7\columnwidth]{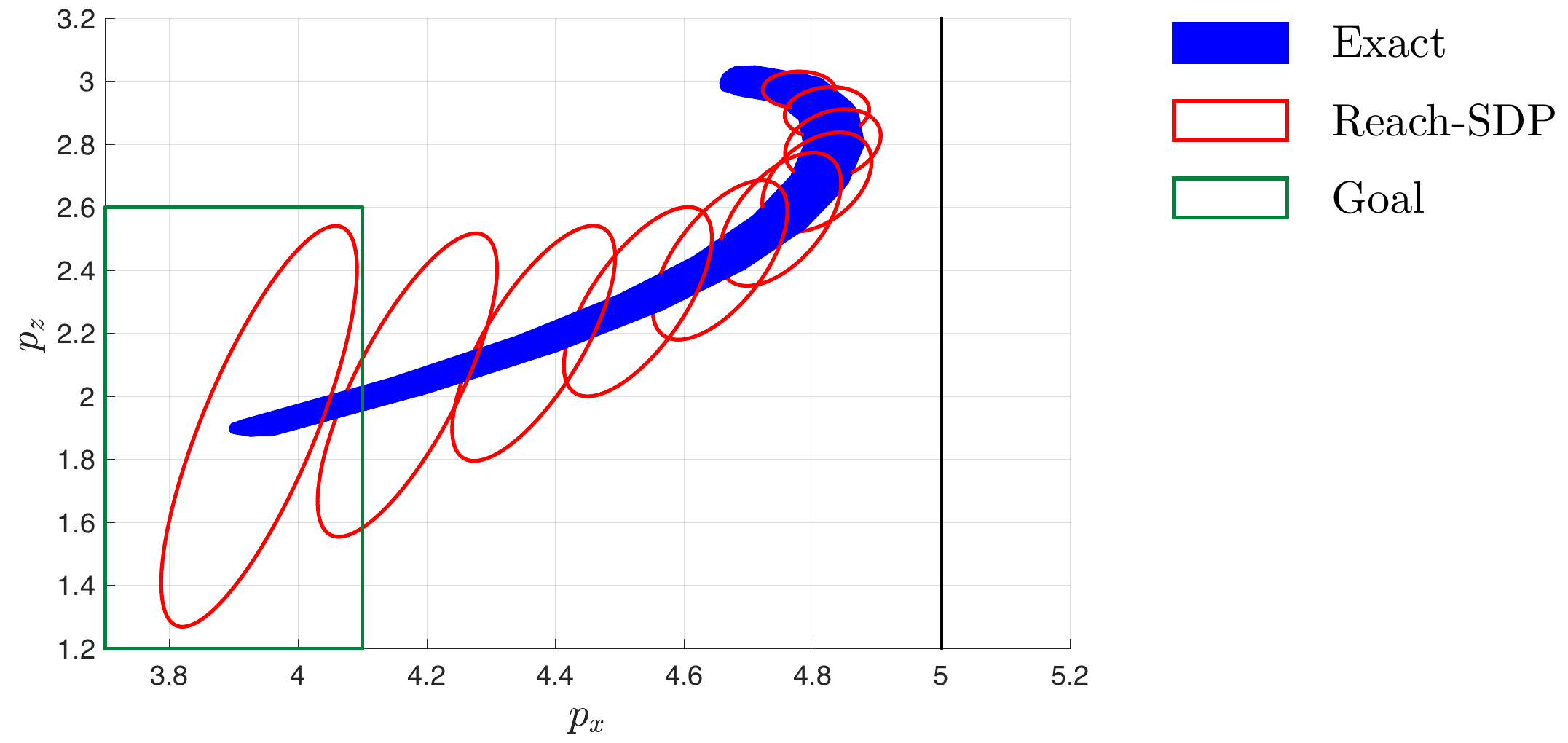}
    \caption{\label{fig::quad_xz}  Illustration of the exact and approximate reachable sets of the quadrotor system in the $(p_x,p_z)$-space.} 
\end{figure}
\section{Conclusions}
In this paper, we propose the first convex-optimization-based reachability analysis method for linear systems in feedback interconnection with neural network controllers. Our approach relies on abstracting the nonlinear components of the closed-loop system by quadratic constraints. Then we show that we can compute the approximate reachable sets via semidefinite programming. Future work includes extending the current approach to incorporate nonlinear dynamics and to approximate backward reachable sets, which is useful for certifying invariance properties.
\appendix
\section{Proof of Proposition \ref{prop::mid}}
\label{pf::CoB_mid}
Assuming the same activation function for all neurons throughout the entire network, we can write \eqref{eq::proj_NN} compactly as
\begin{equation}
\label{eq::compact_NN}
    B \bx+b =\phi(A \bx+a)
\end{equation}
where $\bx \in \R^{n_x+n_n+2n_u}$ is used to define the basis vector in \eqref{eq::seq_basis}.
Now, we introduce two auxiliary variables $\mathbf{y}$ and $\mathbf{z}$ such that $\mathbf{y} = A \bx+a$ and $\mathbf{z} = B \bx+b$.
By Lemma \ref{lemma::QC_relu}, the neural network in \eqref{eq::compact_NN} satisfies the QC defined by $\mathcal{Q}$ in \eqref{eq::relu::QC} with basis $[\mathbf{y}^\top \ \mathbf{z}^\top \ 1]^\top$.
Consider the congruence transformation,
\begin{equation}
     \begin{bmatrix} \mathbf{y} \\ \mathbf{z} \\ 1 \end{bmatrix} = \Emid \begin{bmatrix} \bx \\ 1 \end{bmatrix}.
\end{equation}
By Lemma \ref{lemma::CoB} the neural network \eqref{eq::proj_NN} satisfies the QC defined by $\cMmid$ with basis $[\bx^\top \ 1]^\top$, which concludes the proof.

%


\bibliographystyle{plain}
\bibliography{main}

\newpage
\appendix

\end{document}